\DeclareMathOperator{\tr}{tr}
\DeclareMathOperator{\supp}{supp}
\DeclareMathOperator{\rk}{rk}
\newcommand{\mc}[1]{\mathcal{#1}}
\newcommand{\spn}[1]{\left\langle{#1}\right\rangle}
\newcommand{\C}{\mathbb{C}}
\newcommand{\F}{\mathbb{F}}
\newcommand{\nix}[1]{}
\newcommand{\ket}[1]{|#1\rangle}
\newtheorem{theorem}{Theorem}
\newtheorem{corollary}[theorem]{Corollary}
\newtheorem{lemma}[theorem]{Lemma}
\newtheorem{prop}{Proposition}
\newtheorem{defn}{Definition}
\newtheorem{remark}{Remark}
\newcommand{\ba}{\begin{array}}
\newcommand{\ea}{\end{array}}
\newcommand{\ben}{\begin{eqnarray}}
\newcommand{\een}{\end{eqnarray}}
\newcommand{\be}{\begin{eqnarray*}}
\newcommand{\ee}{\end{eqnarray*}}
\begin{document}
\title{ Quantum Codes and Symplectic Matroids }

\author{Pradeep Sarvepalli}  \email[]{pradeep@phas.ubc.ca}
\affiliation{Department of Physics and Astronomy, University of British Columbia, Vancouver, BC V6T 1Z1}

\date{April 6, 2011}
\begin{abstract}
The correspondence between linear codes and representable matroids is well known. But a 
similar correspondence between quantum codes and matroids is not known. We show that 
representable symplectic matroids over a finite field $\mathbb{F}_q$ correspond to 
$\mathbb{F}_q$-linear quantum codes. Although this connection is straightforward, it does not 
appear  to have been made earlier in literature.  The correspondence is made through 
isotropic subspaces. We also show that the popular Calderbank-Shor-Steane (CSS) codes are 
essentially the homogenous symplectic matroids while the graph states, which figure so 
prominently in measurement based quantum computation,  correspond to  a special class of 
symplectic matroids, namely Lagrangian matroids. This  association is useful in that it 
enables the study of symplectic matroids in terms of quantum codes and vice versa. 
Furthermore, it has application in the study of quantum secret sharing schemes. 
\end{abstract}

\pacs{}
\keywords{quantum codes, symplectic matroids, Lagrangian matroids, graph states, quantum secret sharing, quantum 
cryptography}

\maketitle
\section{Introduction}
Matroids are mathematical structures that abstract the idea of independence. Originally,
introduced by Whitney, they have since found applications in various fields most notably
in algorithms, combinatorial optimization, graphs, cryptography, coding theory to name a
few. A particular class of matroids called the representable matroids are closely related
to error-correcting codes. In fact, the so-called representations of these matroids give rise to
linear codes; further, one can obtain matroids from linear codes. This correspondence goes 
much deeper in that certain invariants of the code are essentially invariants of the matroid
as well. (Most well-known is the connection between 
the weight enumerator of a linear code and the Tutte polynomial of the matroid associated
to the code.) 

Given these associations one is tempted to ask if we can  find a similar 
correspondence between quantum codes and (a class of) matroids? The answer to this
question, as we shall see, is surprisingly simple and straightforward. In fact, it goes back
to the many ways we can view matroids. But 
this connection does not appear to have  made  in the literature so far.

The main results of this paper are the correspondence between quantum codes and matroids,
and applications of this correspondence.
Strictly speaking we establish a correspondence between quantum codes and objects which 
are more general than matroids, called the symplectic matroids. Symplectic matroids generalize
matroids,  although their definition is somewhat more complicated than
matroids. Our result has important applications. It can be used to study quantum codes 
using matroids and vice versa. 
We
also find an application for these results in quantum secret sharing. We show how certain 
symplectic matroids induce quantum secret sharing schemes. There are many important open 
problems that arise with this connection and we are hopeful that further research along these
lines will be fruitful for either communities of quantum information theorists and matroid
theorists. 

\section{Background}

\subsection{Symplectic matroids}
Our presentation of the symplectic matroids follows the exposition in
\cite{borovik03} very closely. Consider the sets $[ n ]=\{1,\ldots, n \}$ and 
$[n]^\ast =\{1^\ast,\ldots, n^\ast \}$. Let $J= [n]\cup [n]^\ast$ and define an involution on
$J$ as 
\ben
\ast:J \rightarrow J, \mbox{ where } i\mapsto i^\ast \mbox{ and } (i^\ast)^\ast=i 
\label{eq:inv}
\een
This map can be extended naturally to subsets of $J$. A set $S\subset J$ is said to be admissible if 
$S\cap S^\ast =\emptyset$. A transversal is an admissible set of size $n$; it is a maximal 
admissible set. 
 Consider now the group of permutations on the set $J$;
a permutation is said to be admissible if it commutes with the involution. This group of admissible permutations  on $J$, denoted as $W$, is the
hyperoctahedral group of symmetries, the group of symmetries of the hypercube $[-1,1]^n$
in $n$-dimensions. 

Consider the  ordering of the elements of $J$ as given by
\ben
n>n-1>  \cdots > 2 >1 > 1^\ast >2^\ast \cdots > n^\ast.
\een
We now define another ordering on the set $J$ by means of the admissible permutation 
$w \in  W$. We say that $i\leq^w j$ if and only if $w^{-1}i \leq w^{-1} j $.
Let $w$ be given by the following permutation:
\be
\left(\ba{cccccccc} 1& 2 &\ldots & n & n^\ast & \ldots & 2^\ast & 1^\ast \\
i_1& i_2 &\ldots & i_n& i_{n+1}&\ldots & i_{2n-1}&i_{2n}\ea \right)
\ee
This permutation induces the ordering $\prec$ given by 
\be
i_1 < i_2 < \cdots< i_n<i_{n+1}<\cdots <i_{2n}.
\ee
Clearly, $\prec$ induces an 
ordering on the subsets of $J$. It can also be used to order subsets $A,B\subset J$.
Given two subsets $A=\{a_1,\ldots, a_m \}$, and 
$B =\{b_1,\ldots, b_m \}$, we say that $A\leq^w B$ if and only if $ a_i\leq^w b_i$,
where we assumed that $A$ and $B$ have been ordered as $\{a_1\prec a_2\prec \cdots \prec a_m \}$ and $\{b_1\prec b_2\prec \cdots \prec b_m \}$ respectively.

\begin{defn}[Symplectic matroids]
Let $J_k$ be the collection of admissible $k$-subsets of $J$ and 
$\mc{B}\subseteq J_k$.  A tuple $(J, \ast, \mc{B})$ is  a symplectic matroid if and only if it satisfies the following
condition:

For every admissible ordering of the set $J$, there exists a unique maximal set $B\in \mc{B}$
such that for all $A \in \mc{B}$, we have $A\prec B$.
\end{defn}

The condition mentioned above is often called the {\bf \textsl{Maximality condition}}.
The elements of $\mc{B}$ are called bases while $\mc{B}$ itself is called collection of the bases  of
the symplectic matroid. The cardinality of the bases is called is the rank of the matroid. 
(All the bases have the same size.) If the rank of the symplectic matroid is
the maximal value of  $n$, then it is said to be a Lagrangian matroid. 

\begin{remark}
Suppose we  set $J=[n]$ and  instead of $W$, we consider the symmetric group of all
permutations, ie. all permutations on $J$ are admissible. 
 Then the tuple $(J,\mc{B})$, where $\mc{B}$ is a collection of $k$-subsets of $J$, is a matroid if and only if $\mc{B}$ satisfies  the Maximality condition. 
In this case the involution plays no role. 
It is common in this case to refer to $J$ as the ground set.
\end{remark}

\subsection{Representable symplectic matroids}
It is often convenient to 
deal with what are known are as the representations of a matroid. These representations 
provide us with  
a concrete object to work with and study the properties of the matroid. An ordinary matroid is said
to have a representation if the elements of the ground set can be identified with the columns
of a matrix (typically over some field) such that columns indexed by the bases are 
maximally linearly independent columns of that matrix. 

Some symplectic matroids can also be endowed with representations. In this case instead of a standard vector space (with an orthogonal basis), we consider a symplectic vector space. That is a space of dimension 
$2n$ and endowed with a symplectic form $\langle \cdot ,\cdot \rangle$, whose basis $\{e_1,\ldots, e_n, e_1^\ast,\ldots, e_n^\ast \}$ satisfies the following relations:
\ben
\langle e_i,e_j \rangle&=&0, i\neq j^\ast\\
\langle e_i,e_i^\ast\rangle&=& -\langle e_i^\ast,e_i\rangle= 1
\een

\begin{defn}
A vector space $V$ over a field $\F$ is said to be isotropic if and only if for any
$u,v\in V$ we have $\langle u , v\rangle = 0 $, where $\spn{\cdot ,\cdot}$ is the
inner product.
\end{defn}

Let $U$ be an isotropic subspace of a symplectic vector space.  Suppose we write down a basis of this 
isotropic space as the rows of a  matrix    $M=[A|B ] \in \F^{k\times 2n}$, where $k$ is the dimension of $V$;  then we must have  $AB^t=BA^t$. 
Index the  columns of $M$ by the set 
$J=[n]\cup[n]^\ast$.
Let $B\subset J$ such that $B\cap B^\ast=\emptyset$ and $|B|=k$. Then if the $k\times k $ 
minor of $M$ indexed by $B$ is nonzero, then we say that $B$ is a basis of $M$. Let $\mc{B}$ denote the collection of bases of $M$. 
Then $(J,*,\mc{B})$  
is a symplectic matroid over $\F$.
\begin{prop}[\cite{borovik03}]\label{prop:repSM}
Let the row space of $M=[A|B] \in \F^{s\times 2n}$ be an isotropic subspace with respect to
a symplectic 
form. Then $M$ is the representation of a symplectic matroid.
\end{prop}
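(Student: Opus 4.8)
The goal is to show that when the row space of $M=[A|B]\in\F^{s\times 2n}$ is isotropic with respect to the standard symplectic form, the collection $\mc{B}$ of admissible $s$-subsets of $J$ indexing nonzero $s\times s$ minors of $M$ satisfies the Maximality condition, so that $(J,\ast,\mc{B})$ is a symplectic matroid. The natural strategy is to reduce the symplectic case to the well-understood ordinary matroid case by exploiting a structural fact about isotropic subspaces: every admissible ordering of $J$ can be realized by a sequence of basis elements $e_{j_1},\dots,e_{j_{2n}}$ such that the span of any admissible initial segment $\{e_{j_1},\dots,e_{j_m}\}$ (with $m\le n$) is complementary, under the symplectic form, to its own dual pattern. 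I would first fix an admissible ordering $\prec$ coming from some $w\in W$, and let $F=\{f_1,\dots,f_{2n}\}$ be the reordering of the standard symplectic basis $\{e_1,\dots,e_n,e_1^\ast,\dots,e_n^\ast\}$ induced by $w$, so that the column ordering of $M$ matches $\prec$.

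The key step is the following claim: because $U=\mathrm{rowspace}(M)$ is isotropic, the Gaussian-elimination (greedy) selection of pivot columns of $M$ with respect to the ordering $\prec$ always produces an \emph{admissible} set. Concretely, process the columns of $M$ in increasing $\prec$-order and greedily build a maximal linearly independent set $B=\{b_1\prec\cdots\prec b_s\}$; I must show $B\cap B^\ast=\emptyset$. Suppose not, so some index $j$ and $j^\ast$ are both pivot columns. Between them, say $j\prec j^\ast$, isotropy forces a relation: once column $j$ is selected into the growing independent set, the symplectic orthogonality of $U$ to itself means that column $j^\ast$ lies in the span of $\{\,$columns $\preceq j^\ast$ already selected$\,\}$ — this is the heart of the argument and uses $AB^t=BA^t$ together with the pairing $\langle e_j,e_j^\ast\rangle=1$. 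I would make this precise by writing the symplectic-orthogonality condition as $M\Omega M^t=0$ where $\Omega$ is the symplectic Gram matrix in the $\prec$-ordered basis, and then reading off that the restriction of $M$ to an admissible column set has full rank iff the complementary admissible set does — this "duality of minors for isotropic matrices" is the algebraic identity doing all the work.

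Granting the claim, the rest is short. The greedy pivot set $B$ is, by construction, the $\prec$-maximal maximal-independent column set, hence $\prec$-maximal among all elements of $\mc{B}$ (any maximal independent column set is $\preceq B$ coordinatewise — this is precisely the statement that the columns of a matrix, ordered arbitrarily, form a matroid satisfying the Maximality condition, which I may invoke as the classical representable-matroid fact mentioned in the Background). Admissibility of $B$ from the claim then shows $B\in\mc{B}$, and uniqueness of the $\prec$-maximum is inherited from the ordinary matroid case. Since $\prec$ was an arbitrary admissible ordering, $(J,\ast,\mc{B})$ satisfies the Maximality condition and is therefore a symplectic matroid.

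**Main obstacle.** The delicate point is the admissibility claim: I expect the bulk of the real work to be the linear-algebra lemma that an isotropic matrix cannot have both a column $j$ and its symplectic partner $j^\ast$ as greedy pivots, or equivalently that for an isotropic $U$ and an admissible set $C$ with $|C|=s$, the $C$-indexed minor of $M$ is nonzero iff the $C$-indexed minor is nonzero for \emph{some} (equivalently, the unique) transversal-completion behaving well under $\Omega$. Getting the signs and the block structure of $\Omega$ in the $w$-permuted basis right, and handling the case $m=n$ (the Lagrangian boundary, where the dual pattern collapses), is where care is needed; everything else is bookkeeping that reduces to Proposition-style facts about ordinary representable matroids already recalled above.
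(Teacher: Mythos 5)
The paper gives no proof of this proposition---it is quoted from Borovik--Gelfand--White---so there is nothing internal to compare against; your outline does follow the standard route from that source: pass to the ordinary matroid on the $2n$ columns, take the Gale-extreme basis produced by the greedy algorithm for the given admissible ordering, and argue that isotropy forces this basis to be admissible, whence it is the unique maximum of $\mc{B}$. The problem is that the proposal leaves exactly that central claim unproven. You state it (``an isotropic matrix cannot have both a column $j$ and its partner $j^\ast$ as greedy pivots''), explicitly defer its proof, and the algebraic identity you offer as the engine---``the restriction of $M$ to an admissible column set has full rank iff the complementary admissible set does''---is false: $M=[\,I\mid 0\,]$ is isotropic, its minor on $[n]$ equals $1$, while its minor on $[n]^\ast$ equals $0$; and the vaguer ``transversal-completion'' reformulation is not a well-defined statement. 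So the heart of the proof is missing, and the route you sketch toward it would not work. There is also a direction slip: greedily processing columns in \emph{increasing} $\prec$-order yields the Gale-\emph{minimal} basis of the column matroid, not the maximal one that the Maximality condition requires; you must process in decreasing order (or, equivalently, apply the argument to the reversed ordering, which is again admissible).

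What closes the gap is a short pairing computation, not a minor identity. Observe that $b$ is a pivot of the decreasing-order greedy iff some vector $u$ in the row space $U$ has its $\prec$-largest nonzero coordinate at $b$ (echelon form with respect to $\prec$). If both $j$ and $j^\ast$ were pivots, choose $u,v\in U$ with leading coordinates $j$ and $j^\ast$ respectively. In $\langle u\mid v\rangle_s=\sum_{i\in[n]}\left(u_iv_{i^\ast}-u_{i^\ast}v_i\right)$, a nonzero product $u_xv_{x^\ast}$ forces $x\preceq j$ (else $u_x=0$) and $x^\ast\preceq j^\ast$, which by admissibility of the ordering means $j\preceq x$; hence $x=j$, only the term $\pm\, u_jv_{j^\ast}$ survives, and $\langle u\mid v\rangle_s\neq 0$, contradicting isotropy of $U$. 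This proves the admissibility of the greedy basis uniformly in the rank $s$, so the ``Lagrangian boundary'' case you flag needs no separate treatment; the remaining steps of your reduction (Gale optimality of the greedy basis over all column bases, hence over all admissible ones, and uniqueness of the maximum) are fine as stated.
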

A symplectic matroid is said to be homogenous if for every basis $B\in \mc{B}$,
we have $|B\cap [n]|$ is same. For such a matroid $|B\cap [n]^\ast|$ is also independent of 
$B$. If such a matroid is representable then its
representation is of the form 
\be
M =\left[\ba{cc} X&0 \\0 & Z \ea \right],
\ee
where $XZ^t=0$.
For the rest of the discussion in this paper we will assume that the matroid 
representations
are over a finite field $\F_q$; occasionally we specialize to the case of $\F_2$ for
simplicity.

\section{Connections with quantum codes}
We recall some of the notions relevant for quantum codes. 
We will confine our discussion to additive quantum codes, in particular to stabilizer codes.  
Interested readers can find more details in \cite{calderbank98,gottesman97} for binary quantum codes and \cite{rains99, ashikhmin01,grassl03,ketkar06} 
 for nonbinary versions. 
 Let $q$ be the power of a prime $p$
 and  $\F_q$ a finite field. Suppose that 
$\C^{q}$ denotes the
$q$-dimensional complex vector space. Fix a basis for $\C^q$ as $B=\{\ket{x}\mid x\in \F_q \}$. We define error operators on $\C^q$ as $X(a)\ket{x} =\ket{x+a}$ and $Z(b)\ket{x}=\omega^{\tr_{q/p}(bx)}\ket{x}$. Error operators on $n$ such $q$-level quantum systems are operators on  $\C^{q^n}$ and are obtained as 
tensor products of the operators on $\C^q$. These error operators form the generalized Pauli group which is denoted as
\ben
\mc{P}_n =\{ \omega^cX(a_1)Z(b_1)\otimes \cdots \otimes X(a_n)Z(b_n)\},\label{eq:Pauligrp}
\een
where $\omega=e^{j2\pi/p}$.

 An $((n,K,d))_q$ quantum code is a $K$-dimensional 
subspace of the $q^n$-dimensional complex vector space $\C^{q^n}$ and 
able to detect all errors on fewer than $d$ subsystems. When $K=q^k$, it is also denoted as
an $[[n,k,d]]_q$ code. A stabilizer code
is the joint eigenspace of an abelian subgroup of $\mc{P}_n$. The subgroup is 
called the stabilizer of the code. For a nontrivial quantum code, the stabilizer does 
not have any scalar multiple of identity other than the identity itself.

By defining a map 
between the Pauli group and the
vector spaces over $\F_q^{2n}$, we can establish a correspondence between quantum codes and
classical codes. This correspondence with the classical codes
has been used extensively in the study of quantum codes \cite{calderbank98,gottesman97,rains99, ashikhmin01,grassl03,ketkar06}.
 An element 
$\omega^cX(a_1)Z(b_1)\otimes \cdots \otimes X(a_n)Z(b_n)$ in $\mc{P}_n$ is mapped to $(a_1,\ldots,a_n|b_1,\ldots, b_n) \in \F_q^{2n}$. Under this
mapping the stabilizer of the quantum code is mapped to a $\F_p$-linear subspace of 
$\F_q^{2n}$. If the image of the stabilizer is also an $\F_q$-linear subspace then we say that
it is an $\F_q$-linear quantum code. 
In this paper we restrict our attention to $\F_q$-linear codes only. 
The image of a set of generators of the stabilizer under this map is often called a
stabilizer matrix.

The relevant bilinear form that we endow $\F_q^{2n} $ with is the symplectic inner product
defined as follows. Let $u,v$  be two vectors in $\F_q^{2n}$ where
$u=(a|b)=(a_1,\ldots, a_n|b_1,\ldots,b_n)$ and $v=(c|d)=(c_1,\ldots,c_n|d_1,\ldots,d_n)$.
 Then their symplectic inner product is defined as 
\ben
\langle u | v \rangle_s  =   ( a\cdot d -c\cdot b ).
\een

It is $\F_q$-linear in the sense that 
$\langle u | v \rangle_s  =0$ if and only if $\langle \alpha u | \beta v \rangle_s  =0 $
for all $\alpha,\beta \in \F_q$.
It can be easily checked that this form is 
asymmetric as 
$\langle u | v \rangle_s = - \langle v | u \rangle_s$. Denoting the standard basis of 
$\F_q^{2n}$ as $\{e_{i},\ldots,e_n, e_1^{\ast},\ldots, e_n^\ast\}$, we can check that 
$\langle e_i|e_j\rangle_s = 0 $ for $i\neq j^\ast$,  and
$\langle e_i|e_i^\ast\rangle_s = 1 $. 

In this case the stabilizer matrix of an  $\F_q$-linear $[[n,k,d]]_q$ quantum
code defines an isotropic subspace of $\F_q^{2n}$ and is an element of $\F_q^{(n-k)\times 2n}$. This gives us the following result:

\begin{prop}[\cite{calderbank98,gottesman97}]\label{prop:isotropic}
Let $Q$ be an $[[n,k,d]]_q$ $\F_q$-linear quantum code, then the row space of the stabilizer matrix of the
code defines an isotropic subspace of dimension $n-k$.
\end{prop}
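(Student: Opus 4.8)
The plan is to unpack the definition of an $\F_q$-linear stabilizer code and show directly that the two defining properties of the stabilizer — abelianness and nontriviality — translate precisely into ``isotropic'' and ``dimension $n-k$'' for the image subspace. First I would recall that the stabilizer $S$ of an $[[n,k,d]]_q$ code is an abelian subgroup of $\mc{P}_n$ containing no nonidentity scalar, and that under the map $\omega^c X(a_1)Z(b_1)\otimes\cdots\otimes X(a_n)Z(b_n)\mapsto (a|b)\in\F_q^{2n}$ the image of $S$ is, by hypothesis, an $\F_q$-linear subspace $C\subseteq\F_q^{2n}$. The key computational fact I would invoke is the commutation rule for generalized Pauli operators: for group elements $g,h$ mapping to $u=(a|b)$ and $v=(c|d)$ respectively, one has $gh = \omega^{\tr_{q/p}(\langle u|v\rangle_s)} hg$, so $g$ and $h$ commute if and only if $\tr_{q/p}(\langle u|v\rangle_s)=0$.

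The next step is to promote this ``trace is zero'' condition to ``the symplectic form itself is zero,'' which is where $\F_q$-linearity enters. Since $C$ is $\F_q$-linear, if $u,v\in C$ then $\alpha u\in C$ for every $\alpha\in\F_q$, so we would need $\tr_{q/p}(\alpha\langle u|v\rangle_s)=0$ for all $\alpha\in\F_q$. Because the trace form $\tr_{q/p}:\F_q\to\F_p$ is nondegenerate, the only element annihilated by $\tr_{q/p}(\alpha\,\cdot\,)$ for every $\alpha$ is $0$; hence $\langle u|v\rangle_s=0$. Applying this to every pair of vectors in $C$ shows that $C$ is isotropic with respect to $\langle\cdot\mid\cdot\rangle_s$, and so the row space of any stabilizer matrix (a generating set for $C$) spans an isotropic subspace.

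Finally I would pin down the dimension. Restricted to the image, the Pauli-to-vector map has kernel exactly the scalars $\omega^c I$, and since a nontrivial code's stabilizer contains no nonidentity scalar, $S$ maps injectively onto $C$ modulo phases; thus $|C|=q^{\dim_{\F_q}C}$ equals the number of distinct Pauli operators (up to phase) in $S$. The joint eigenspace dimension $K=q^k$ of an abelian stabilizer is $q^n/|S/\langle\text{phases}\rangle|$, so $q^{\dim C}=q^{n-k}$, giving $\dim_{\F_q}C = n-k$. Stringing these together yields the proposition. I expect the only genuinely delicate point to be the passage from the trace condition to the vanishing of the symplectic form — i.e., making explicit that $\F_q$-linearity (not merely $\F_p$-linearity) of the code is exactly what is needed, via nondegeneracy of $\tr_{q/p}$; the abelianness-to-isotropy and index-to-dimension arguments are otherwise routine bookkeeping.
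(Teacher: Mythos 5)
Your proposal is correct, and it is in fact the standard argument: the paper itself offers no proof of this proposition, citing it to \cite{calderbank98,gottesman97} (and the surrounding discussion), so there is nothing internal to compare against beyond that attribution. Your three steps --- (i) commutativity of Pauli elements with images $u,v$ is equivalent to $\tr_{q/p}\scal{u}{v}=0$, (ii) $\F_q$-linearity of the image plus nondegeneracy of the trace form upgrades this to $\scal{u}{v}=0$, hence isotropy, and (iii) injectivity of the map on a scalar-free stabilizer together with $K=q^n/|S|$ gives $\dim_{\F_q} = n-k$ --- are exactly the reasoning in the cited references, and you correctly identify step (ii) as the only place where $\F_q$-linearity (rather than mere $\F_p$-linearity) is used. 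The only nitpicks are cosmetic: the commutation phase is $\omega^{\tr_{q/p}(\pm\scal{u}{v})}$ depending on ordering conventions (irrelevant for the vanishing condition), and the phrase ``injectively onto $C$ modulo phases'' is redundant since $S$ contains no nontrivial scalars, so the map is injective on $S$ itself.
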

Putting together with our discussion on the representations of symplectic matroids the 
following result  is immediate.

\begin{theorem}\label{th:qeccMat}
Let $Q$ be an $[[n,k,d]]_q$ $\F_q$-linear quantum code. Then $Q$ induces a representable symplectic matroid over $\F_q$ of rank $n-k$. If $Q$ is a CSS code it induces a representable 
homogenous matroid.
\end{theorem}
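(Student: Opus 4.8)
The plan is to read the matroid directly off the stabilizer matrix and let the two propositions just stated do the work. For the first assertion, take the stabilizer matrix $M\in\F_q^{(n-k)\times 2n}$ of $Q$ with its columns indexed by $J=[n]\cup[n]^\ast$ in the obvious way: the first $n$ columns carry the $X$-part and the last $n$ the $Z$-part, matching the basis $e_1,\ldots,e_n,e_1^\ast,\ldots,e_n^\ast$ of $\F_q^{2n}$. By Proposition~\ref{prop:isotropic} the row space of $M$ is an isotropic subspace of dimension $n-k$, so $M$ has full row rank. Proposition~\ref{prop:repSM} then gives at once that $M$ represents a symplectic matroid $(J,\ast,\mc{B})$ whose bases are precisely the admissible $(n-k)$-subsets of $J$ indexing a nonsingular $(n-k)\times(n-k)$ minor of $M$; since $M$ has rank $n-k$, at least one such minor is nonzero, so $\mc{B}\neq\emptyset$ and the matroid has rank $n-k$. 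This settles the first claim.

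For the CSS case I would use that such a code is exactly one whose stabilizer is generated by purely $X$-type and purely $Z$-type operators; equivalently, after row operations — which change neither the row space nor, hence, $\mc{B}$ — its stabilizer matrix has the block-diagonal form
\be
M=\left[\ba{cc}X&0\\0&Z\ea\right],\qquad XZ^t=0,
\ee
with the first $n$ columns (the block containing $X$) indexed by $[n]$ and the last $n$ (the block containing $Z$) by $[n]^\ast$. Because $M$ is block-diagonal of full row rank $n-k$, both $X$ and $Z$ have full row rank; write $r_1$ and $r_2=n-k-r_1$ for their numbers of rows. What remains is to check the homogeneity condition, namely that $|B\cap[n]|$ is the same for every $B\in\mc{B}$; this, together with the block form displayed above, is exactly what identifies the induced matroid as a representable homogenous matroid.

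I expect this last point to carry all the content; everything else is just an assembly of Propositions~\ref{prop:isotropic} and~\ref{prop:repSM}. The argument I have in mind is: if a basis $B$ uses $j$ columns from $[n]$ and $n-k-j$ from $[n]^\ast$, then the corresponding $(n-k)\times(n-k)$ submatrix of $M$ is again block-diagonal, with an $r_1\times j$ block coming from $X$ sitting above an $r_2\times(n-k-j)$ block coming from $Z$ and zeros off the diagonal. A block-diagonal square matrix is nonsingular only if each of its diagonal blocks is square, which here forces $j=r_1$. Hence $|B\cap[n]|=r_1$ (and $|B\cap[n]^\ast|=r_2$) for every $B\in\mc{B}$, so the induced matroid is homogenous with the representation shown, completing the proof.
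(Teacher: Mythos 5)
Your proposal is correct and takes essentially the same route as the paper: both simply assemble Propositions~\ref{prop:isotropic} and~\ref{prop:repSM} and then invoke the block form of equation~\eqref{eq:repHSM} for a CSS stabilizer matrix, with you additionally writing out the homogeneity check (a nonsingular admissible minor of the block-diagonal matrix forces $|B\cap[n]|=r_1$), which the paper asserts without detail. One minor imprecision: nonemptiness of $\mc{B}$ does not follow merely from $\rk(M)=n-k$, since a full-rank set of $n-k$ columns need not be admissible; the existence of an admissible basis is part of what Proposition~\ref{prop:repSM} guarantees, so your conclusion still stands.
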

\begin{proof}
This is an immediate  consequence of Proposition~\ref{prop:isotropic} and Proposition~\ref{prop:repSM}. The stabilizer matrix of a CSS code is precisely the same form as in equation~\eqref{eq:repHSM},  (see \cite{calderbank98}) and consequently, it induces a homogeneous symplectic matroid.
\end{proof}
It turns out the distance of the quantum code is related to the cardinality of the 
circuit of smallest size but to prove it more precisely we must wait till we have 
a few more results in hand.

With appropriate permutation of the
columns of its representation a representable Lagrangian matroid can  be put in the form $\left[\ba{cc}I&A\ea\right]$,
where $A$ is a symmetric matrix. If $A$ is such that its diagonal is all zero then we
can identify it with adjacency matrix of a (weighted) graph. 
Recall that a graph state over $\F_2$ is defined as the quantum
state  whose stabilizer is given by 
\ben
S= \spn{K_v \mid v\in V(G); K_v= X_v\!\prod_{u\in N(v)}\! Z_u}\label{eq:gstateF2}
\een
where  $V(G)$ is the vertex set of $G$ and $N(v)$ is the set of neighbors of $v$.
If $G$ is a weighted graph we can define a graph state over $\F_q$ with stabilizer  as 
follows:
\ben
S= \spn{K_v \mid v\in V(G) ; K_v= X_v(1)\!\prod_{u\in N(v)}\! Z_u(w_{uv})}
\label{eq:gstateFq}
\een
where  
$w_{uv}$ is the weight of the edge $uv$.
See  \cite{schlingemann00,bahramgiri06} for more details on 
nonbinary graph states.

Since a stabilizer state corresponds to an $[[n,0,d]]_q$ code, Theorem~\ref{th:qeccMat}
implies the following:
\begin{corollary}
Every graph state  induces a representable Lagrangian matroid. 
\end{corollary}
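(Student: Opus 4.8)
The plan is to invoke Theorem~\ref{th:qeccMat} directly, so the main work is simply to verify that a graph state fits its hypotheses and that ``rank $n-k$'' with $k=0$ lands us in the Lagrangian case. First I would observe that the defining stabilizer~\eqref{eq:gstateFq} of a graph state on $n=|V(G)|$ vertices is already given in terms of generators $K_v$, one per vertex, and that these generators pairwise commute (a routine check: $K_u$ and $K_v$ each contribute a single $X$ and the $Z$'s are placed along edges, so the symplectic inner product of the corresponding vectors is $w_{uv}-w_{vu}=0$ since the graph is undirected with symmetric weights). Hence $\langle K_v\mid v\in V(G)\rangle$ is an abelian subgroup of $\mc{P}_n$ containing no nontrivial scalar, so it is the stabilizer of a genuine quantum code; since it has $n$ independent generators, the code is $[[n,0,d]]_q$ for some $d$, i.e.\ a stabilizer \emph{state}.

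Next I would apply Theorem~\ref{th:qeccMat} to this $[[n,0,d]]_q$ $\F_q$-linear code. The theorem produces a representable symplectic matroid over $\F_q$ of rank $n-k=n-0=n$. By the definition of symplectic matroids given in the Background section, a symplectic matroid whose rank equals the maximal value $n$ is precisely a Lagrangian matroid. Thus the induced matroid is a representable Lagrangian matroid, which is exactly the claim. For concreteness I would also note that the stabilizer matrix coming from~\eqref{eq:gstateFq}, after the permutation of columns mentioned in the text, is $[I\mid A]$ with $A$ the (symmetric, zero-diagonal) weighted adjacency matrix, so the associated isotropic subspace is manifestly $n$-dimensional and self-dual under $\langle\cdot\mid\cdot\rangle_s$, confirming the Lagrangian property independently of the black-box appeal to the theorem.

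I do not expect a serious obstacle here: the corollary is essentially a specialization of Theorem~\ref{th:qeccMat} to the case $k=0$, combined with the definitional fact that a rank-$n$ symplectic matroid is Lagrangian. The only point requiring a line of justification is that the $K_v$ genuinely generate the stabilizer of a bona fide quantum code --- i.e.\ that they commute and that the generators are independent so the code dimension is $q^0=1$ --- and both of these are immediate from the structure of~\eqref{eq:gstateFq} (the commutation follows from symmetry of the edge weights, and independence from the presence of a distinct $X_v(1)$ factor in each $K_v$, which makes the $X$-part of the stabilizer matrix the identity after reordering). Once that is in place, the rest is a direct citation of the prior results.
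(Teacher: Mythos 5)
Your proof is correct and follows the same route as the paper, which simply notes that a graph state is a stabilizer state, i.e.\ an $[[n,0,d]]_q$ code, and applies Theorem~\ref{th:qeccMat} so that the induced symplectic matroid has rank $n$ and is therefore Lagrangian by definition. Your additional verifications (commutation and independence of the $K_v$, and the explicit $[I\mid A]$ representation) are fine but just flesh out details the paper takes as standard facts about graph states.
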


We pause to note a few differences with respect to the correspondence between matroids and
classical codes. In case of classical codes the independent sets correspond to a subset of
errors that are detectable. The codewords correspond to dependent sets. Further, the minimally dependent codewords characterize the matroid completely. (A minimal codeword $x$
does not contain the support of any other codeword $y$, unless $y$ is the scalar of $x$.)
The supports of these minimal  codewords are called circuits of the associated matroid. 
The concept of circuits can be generalized  for symplectic matroids
but circuits are most useful in the characterization of special cases of  symplectic matroids
 such as Lagrangian matroids. 

Classical (linear) codes have well-defined dual codes, on the other hand, there is no 
equivalent notion of a dual quantum code for a quantum code be it linear or additive. And not surprisingly,
we find that a similar notion of duality is lacking for symplectic matroids. There has been
a suggestion by Borovik \cite{chow03} to use the involution defined in equation~\eqref{eq:inv} for defining duals, however this
suggestion seems to be most fruitful for the Lagrangian matroids and not for the 
general symplectic matroids. 

\begin{remark}[Quantum codes and ordinary matroids]
Suppose that an $[[n,k,d]]_q$ quantum code is $\F_{q^2}$-linear, then we can also associate 
an ordinary matroid to that code in addition to a symplectic matroid. In this case the 
stabilizer matrix can be represented by a $(n-k)/2\times n $ matrix over $\F_{q^2}$.
In this particular instance, we can associate 
the vector matroid of this matrix to the quantum code. Thus $\F_{q^2}$-linear codes afford
multiple associations to matroids. 
\end{remark}

\subsection{New quantum codes from graphical symplectic matroids}

Quantum codes from graphs have been studied extensively in the context of fault tolerance. 
We now propose a new class of quantum codes induced by graphs by way of symplectic matroids. 
These are derived from the graphical symplectic matroids proposed by Chow \cite{chow03}.

The graphical symplectic matroids are defined as follows. Let $G$ be a graph of $n$ edges. Label the edges of the graph by a transversal $T\subset [n]\cup [n]^\ast $. (Recall that a
transversal in an admissible set of size $n$.)
A cycle in $G$ is called balanced if there are an even number of edges labeled with elements
from $[n]^\ast$, otherwise it is said to be unbalanced.
An admissible set $S\subset [n]\cup [n]^\ast$ is an independent
set if it is either a forest or every connected component is a tree plus an edge such that the cycle has an odd number of edges in $[n]^\ast$. It is the import of  
\cite[Theorem~2]{chow03}, that the maximal independent sets form the bases of a
 symplectic matroid.

 Assuming a connected graph, we can state some properties of
these symplectic matroids. If the graph is a tree, then the rank of the symplectic matroid
is $|V|-1$. If the graph is not a tree, then the rank is $|V|$. If these matroids are
representable then we have a quantum code from Theorem~\ref{th:qeccMat}. However, all 
graphic symplectic matroids are not representable \cite{chow03}. Supposing that it is 
representable then the code has parameters $[[|E(G)|, |E(G)|-|V(G)|, d]]_q$, where $d\geq $
 the smallest cycle in the graph.

 As an example, the complete graph on three vertices 
is identical to the graph state on that graph.  For dense graphs the associated  codes are not likely 
to have good distance. On the other hand, sparse graphs might lead to good quantum
codes.  The main 
reason for proposing these codes is to illustrate the 
possibility that matroids can provide new perspectives on quantum codes.

\subsection{New symplectic matroids via quantum codes}
Unlike matroids, symplectic matroids are a little more restricted in obtaining new symplectic
matroids from existing ones. There are a however, few constructions known for constructing symplectic matroids: 
contraction,  truncation, Higgs lift and direct sum \cite{borovik03}. 
For the
representable symplectic matroids which correspond to $\F_q$-linear quantum codes
one can relate these constructions to familiar coding theoretic operations.

Consider a symplectic matroid of rank $k$ whose collection of bases are given by $\mc{B}$. 
Contraction (along) $a\in J$ is defined by the following operation:
\ben
\mc{B}'= \{ B \mid (B\cup \{a\}) \in \mc{B}\},\label{eq:contraction}
\een
where $\mc{B}'$  is the  collection of bases of the resulting symplectic matroid.
This translates to obtaining an $[[n-1,k]]_q$ from an $[[n,k]]_q$ code.
Truncation modifies $\mc{B}$  as 
\ben
\mc{B}'= \{ A \in J_{k-1} \mid A \subset B \in \mc{B}\}.\label{eq:truncation}
\een
In coding theoretic terms this is equivalent to obtaining an $[[n,n-k+1]]_q$ quantum code
from an $[[n,n-k]]_q$ quantum code.

On the other hand 
deletion corresponds to puncturing on the underlying code and as this does not always preserve
a self-orthogonality of the code, this construction does not generalize.
An interesting method for constructing new symplectic matroids is the so-called  Higgs lift \cite{borovik03}. 
This corresponds to obtaining an $[[n,k-1]]_q$ code from an $[[n,k]]_q$ code.

Two symplectic matroids can be combined to give rise to a third matroid in many ways. The
simplest method is the direct sum method. 
Concatenation is a popular method to construct new codes and if done appropriately it gives
rise to another self-orthogonal code. There are many flavors of concatenating quantum
codes \cite{rains99,grassl05}.  These constructions can be translated to equivalent constructions of 
symplectic matroids.

\subsection{Transformations of symplectic matroids}
One of the most studied equivalence of quantum codes is local equivalence, especially
local Clifford equivalence. It is natural to ask if this corresponds to any equivalence
on the associated symplectic matroids. The (representable) symplectic matroids are not going 
to be preserved under local Clifford operations in general. 
This can be checked with the complete graph on 3 vertices and the graph obtained
by local complementation at any of the vertices.  The symplectic matroid associated with 
the line graph on 3 vertices has the representation 
\be
\left[\ba{ccc|ccc}1&0&0&0&1&1 \\
0&1&0&1&0&0\\
0&0&1&1&0&0\ea\right]
\ee
with the associated bases being $\{ \{1,2,3 \}, \{1^\ast, 2^\ast, 3 \}, \{1^\ast, 2, 3^\ast \}  \}$
On the other hand,  the symplectic matroid of graph state on the complete graph on three vertices which is 
local Clifford equivalent to it has the representation
\be
\left[\ba{ccc|ccc}1&0&0&0&1&1 \\
0&1&0&1&0&1\\
0&0&1&1&1&0\ea\right]
\ee
This symplectic matroid has its collection of bases 
$\{ \{1,2,3 \}, \{1^\ast, 2^\ast, 3 \}, \{1^\ast, 2, 3^\ast \}, \{1,2^\ast,3^\ast \} \}$.
This prompts the question is there an operation by which we can express this transformation
of the symplectic matroid in terms of an operation on its bases?

One of the methods to obtain an equivalent symplectic matroid is via the torus action 
defined as follows. Let  
$[A|B]$ be the representation of a symplectic matroid.
 Then for any invertible $n\times n $ diagonal matrix $T$, the representation 
 $[A T ^{-1} |B T ]$  is also a representation of the symplectic matroid. 
The torus action gives rise to an equivalent quantum code with the same parameters.
Furthermore, the weight distribution of the code is unchanged under the torus action. 

\subsection{Representable homogeneous symplectic matroids}

Given a symplectic matroid define a circuit to be 
 a minimally dependent admissible subset of $J$. 
Then we have the
following characterization for the homogenous symplectic matroids. 
These results will be needed later in the section on quantum secret sharing. 

\begin{lemma}\label{lm:repCkts}
Every circuit of a representable homogeneous symplectic matroid consists of either elements in
$[n]$ or $[n]^\ast$. 
\end{lemma}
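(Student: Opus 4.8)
The plan is to push everything through the block-diagonal shape of the representation that homogeneity forces. By the discussion preceding Proposition~\ref{prop:repSM}, a representable homogeneous symplectic matroid has a representation
\be
M = \left[\ba{cc} X & 0 \\ 0 & Z \ea\right]
\ee
with $XZ^t=0$, where the column of $M$ indexed by $i\in[n]$ equals $\left[\ba{c}X_i\\ 0\ea\right]$ and the column indexed by $i^\ast$ equals $\left[\ba{c}0\\ Z_i\ea\right]$, with $X_i,Z_i$ the $i$-th columns of $X,Z$. Recall that an admissible set $S\subseteq J$ is dependent exactly when the columns of $M$ it indexes are linearly dependent, and a circuit is a minimal such set among admissible subsets.

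First I would take an arbitrary circuit $C$ and split it as $C=C_1\cup C_2^\ast$ with $C_1=C\cap[n]$ and $C_2^\ast=C\cap[n]^\ast$, so that $C_2\subseteq[n]$. Next I would observe that, because the off-diagonal blocks of $M$ vanish, any linear relation among the columns indexed by $C$ decouples into a top-block relation $\sum_{i\in C_1}\lambda_i X_i=0$ and an independent bottom-block relation $\sum_{j\in C_2}\mu_j Z_j=0$: a vector of $\F_q^{2n}$ is zero iff both its top $n$ and bottom $n$ coordinates are zero. Since $C$ is dependent, at least one of these two relations is nontrivial; say the top one, which forces $C_1\neq\emptyset$ and makes $C_1$, regarded as a set of column indices of $M$, dependent. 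Then I would invoke minimality: $C_1\subseteq C$ is an admissible dependent subset of the circuit $C$, so $C_1=C$, hence $C_2^\ast=\emptyset$ and $C\subseteq[n]$. The symmetric case (bottom relation nontrivial) gives $C\subseteq[n]^\ast$, completing the argument.

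The only places that need care — and the closest thing to an obstacle — are bookkeeping rather than depth: one must note that both $C_1\subseteq[n]$ and $C_2^\ast\subseteq[n]^\ast$ are automatically admissible, so that minimality over admissible subsets genuinely applies to them, and one must be explicit that the decoupling of a dependency into its two blocks is valid. Neither is difficult; the substance of the lemma is simply that homogeneity yields a block-diagonal representation and a block-diagonal matrix cannot produce a minimal column dependency that straddles both blocks.
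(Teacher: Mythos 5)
Your proof is correct and follows essentially the same route as the paper's: both exploit the block-diagonal form $M=\left[\begin{array}{cc}X&0\\0&Z\end{array}\right]$ to decouple any dependency on a circuit into a top-block and a bottom-block relation, and then invoke minimality to rule out a circuit straddling $[n]$ and $[n]^\ast$. Your write-up is in fact a bit more careful than the paper's (handling the case where only one of the two decoupled relations is nontrivial, whereas the paper asserts both parts are dependent), but the underlying argument is the same.
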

\begin{proof}
Suppose that there is a minimally dependent admissible set $C\subset J$ such that 
$C\cap [n] \neq \emptyset$ and $C\cap [n]^\ast \neq \emptyset$. Without loss of 
generality assume that $C=\{1,\ldots,m,(m+1)^\ast,\ldots,p^\ast \}$. Assume that the
representation of the symplectic matroid is given by 
\ben
M= \left[ \ba{c|c} X& 0 \\0 & Z \ea\right].\label{eq:repHSM}
\een
As $C$ is a circuit, there exists a linear combination of the columns $\{1,\ldots, m \}$
and the columns $\{(m+1)^\ast,\ldots, p^\ast \}$. However given the fact that the
representation of the matroid is of the form equation~\eqref{eq:repHSM}, the columns 
$\{1,\ldots, m \}$ and $\{(m+1)^\ast,\ldots, p^\ast \}$ are linearly dependent as well.
But this implies that $C$ is not a minimally dependent set. Therefore every circuit of 
the homogenous symplectic matroid is either a subset of $[n]$ or $[n]^\ast$ but not both.
\end{proof}

\begin{theorem}
Representable homogenous symplectic matroids, satisfy the Circuit elimination property: 
If $C_1$, $C_2 \in \mc{C}$, such that $e\in C_1\cap C_2 $ and $C_1\cup C_2$
is admissible, then there exists a circuit $C\in \mc{C}$ such that $C \subseteq (C_1\cup C_2)\setminus \{ e\}$. 
\end{theorem}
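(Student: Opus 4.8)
\emph{Proof plan.} The plan is to use Lemma~\ref{lm:repCkts} to split the problem into two copies of ordinary matroid circuit elimination. First I would apply Lemma~\ref{lm:repCkts}: each of $C_1,C_2$ is contained entirely in $[n]$ or entirely in $[n]^\ast$. Since $[n]$ and $[n]^\ast$ are disjoint, the assumption $e\in C_1\cap C_2$ forces $C_1$ and $C_2$ to lie on the same side. Replacing the representation $\left[\ba{c|c}X&0\\0&Z\ea\right]$ of \eqref{eq:repHSM} by the one obtained from interchanging $X$ and $Z$ if necessary, I may assume $C_1,C_2\subseteq[n]$; note that then $C_1\cup C_2\subseteq[n]$ is automatically admissible, so the admissibility hypothesis plays no role in this case.

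Next I would relate circuits on one side to an ordinary matroid. In the representation~\eqref{eq:repHSM} the columns indexed by $[n]$ are precisely the columns of $X$ (padded by zeros), so an admissible set $S\subseteq[n]$ is dependent in $(J,\ast,\mc{B})$ exactly when the corresponding columns of $X$ are linearly dependent over $\F_q$. Hence the circuits of the symplectic matroid that lie in $[n]$ are exactly the circuits of the ordinary vector matroid $M[X]$ of $X$, and likewise circuits in $[n]^\ast$ are the circuits of $M[Z]$. The one point requiring care is to confirm that the notion of dependence underlying the definition of a circuit, restricted to admissible subsets of $[n]$, is just linear dependence of the columns of $X$; this is implicit in the proof of Lemma~\ref{lm:repCkts} and is the main bookkeeping step.

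Finally, I would apply the classical circuit elimination axiom. Since $C_1,C_2$ are circuits of $M[X]$ with $e\in C_1\cap C_2$, ordinary matroid theory gives a circuit $C$ of $M[X]$ with $C\subseteq(C_1\cup C_2)\setminus\{e\}$. As $C\subseteq[n]$, it is admissible, and by the identification above it is a circuit of the symplectic matroid; it satisfies the required containment, completing the argument. The symmetric case $C_1,C_2\subseteq[n]^\ast$ is identical with $Z$ in place of $X$. I expect no real obstacle beyond the reduction itself: the content of the theorem is that Lemma~\ref{lm:repCkts} decouples a representable homogeneous symplectic matroid into two ordinary representable matroids, each of which already satisfies circuit elimination.
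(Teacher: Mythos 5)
Your proposal is correct and follows essentially the same route as the paper: both use Lemma~\ref{lm:repCkts} to force $C_1$ and $C_2$ onto the same side ($[n]$ or $[n]^\ast$), where admissibility is automatic, and then conclude by ordinary circuit elimination for the representable matroid on that side. The only cosmetic difference is that you cite the classical circuit elimination axiom for the vector matroid $M[X]$, whereas the paper writes out the underlying linear dependence of $(C_1\cup C_2)\setminus\{e\}$ directly.
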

\begin{proof}
Let $C_1$ and $C_2$ be two circuits of $\mc{M}$. By Lemma~\ref{lm:repCkts}, every such 
circuit consists of elements in $[n]$ or $[n]^\ast$. Suppose that $C_1\cap C_2\neq \emptyset$.
Then this is possible if and only if both $C_1,C_2\subset [n]$ or $C_1,C_2\subset[n]^\ast$.
Without loss of generality assume that $C_1,C_2\subset [n]$. Let $e\in C_1\cap C_2$.
Then $e$  can be expressed a linear combination of columns in $C_1\setminus \{e \}$
as well as $C_2\setminus \{e\}$. It is then immediate that $C_1 \cup _2\setminus \{e \}$
is a dependent set and must contain a minimal dependent set equivalently a circuit in 
$[n]$, which is clearly an admissible set. 
Thus representable homogenous symplectic matroids satisfy the circuit elimination property.
\end{proof}

Before we move to some applications of these results, we raise the question we address
the issue of  invariants for the symplectic matroids. 

\subsection{Invariants for symplectic matroids}
An important invariant associated with matroids is the rank polynomial. As a weight 
enumerator captures many of the invariants of the code (such as distance), the rank 
polynomial encodes information about many invariants of the matroids. The rank polynomial has 
been related to other polynomials of interest such as Tutte polynomial of a graph, the 
Kauffman polynomial of a knot, the partition function  and has been studied extensively in 
view of its relevance to complexity theory. But from a coding theoretic point of view the
weight enumerator and the rank polynomial are closely related. 
All this brings up the question if there are similar 
polynomials for the symplectic matroids which are of interest to quantum codes. 
A general answer to this question eludes us, but when we focus our attention to the 
Lagrangian matroids, we can partially answer this question.

In \cite{bouchet05}, Bouchet studied graph polynomials for isotropic systems that are related to the Tutte polynomial of an associated graph. 
 Isotropic systems are essentially 
Lagrangian matroids. Consequently the following Tutte-Martin polynomials as defined by Bouchet are
only defined for  Lagrangian matroids. 

\begin{defn}[Restricted Tutte-Martin polynomial]
Let $L$ be a Lagrangian matroid. Define the restricted Tutte-Martin polynomial as 
\ben
m(L; x) &=& \sum_{S\in J_n} (x-1)^{n-\rk(S)}.\label{eq:rtmPoly}
\een
where $n=\rk(L)$.
\end{defn}

We could attempt to define a similar polynomial for symplectic matroids that are not Lagrangian. For a symplectic matroid,  $L$ we define the restricted Tutte-Martin polynomial as
\ben
m(L; x) &=& \sum_{S\in J_k} (x-1)^{k-\rk(S)}.\label{eq:rtmPoly}
\een
where $k=\rk(L)$.

Suppose $M$ is a representable  Lagrangian matroid, with representation $[I | A]$, for some symmetric matrix, $A$. 
Then its restricted Tutte-Martin polynomial 
is the same as the interlace polynomial of a graph $G$ with adjacency matrix $A$.
Note that the interlace polynomial $q_N(x)$ is defined as \cite{aigner04}
\ben
q_N(G; x) &= &\sum_{S \subseteq V(G)} (x-1)^{\text{corank}(G(S))},
\een
where $G(S)$ is the subgraph of $G$ induced by $S$.
Bouchet who originally defined the restricted Tutte-Martin polynomial gave it in a slightly
different form.
\nix{
The restricted Tutte-Martin polynomial of a representable Lagrangian matroid $L$ is defined as follows:
\ben
M(L,u;x) &=& \sum_{c\in \F^n : c_i\neq u_i} (x-1)^{\dim(V \cap \hat{c} )},
\een
where $\F=\F_{q^2}^\times$ and $\hat{c}= \{ y \in \F_{q^2}^n \mid y_i = 0 \text{ or } y_i =c_i  \}$ and $V$ is the isotropic vector space that represents $L$.

\begin{defn}[Global Tutte-Martin polynomial\cite{bouchet05}]
The global Tutte-Martin polynomial of a representable Lagrangian matroid $L$ is defined as follows:
\ben
M(L;x) &=& \sum_{c\in \F^n} (x-2)^{\dim(V \cap \hat{c} )},
\een
where $\F=\F_{q^2}^\times$ and $\hat{c}= \{ y \in \F_{q^2}^n \mid y_i = 0 \text{ or } y_i =c_i  \}$.
\end{defn}

}

Recent work \cite{danielsen10} has made the connection between interlace polynomial and orbits
of quantum states and codes under edge local complementation.
Perhaps the most famous polynomial associated to matroids is the rank polynomial or 
the Tutte polynomial. It does not seem possible to define a Tutte polynomial for a 
symplectic matroid in general  and 
might require an expansion of the definition of symplectic matroid.

\section{Application for quantum secret sharing}
In \cite{pre10}, connections between matroids and quantum secret sharing schemes were 
investigated. It was shown that identically self-dual matroids induce quantum 
secret sharing schemes thereby this establishing a connection between matroids and
quantum secret sharing schemes. However, it was somewhat limited in that 
only quantum secret sharing schemes that are realized using a CSS code were within that
correspondence. In present section we intend to make this matroidal correspondence stronger 
by including a larger class of schemes some of which can be realized by non-CSS codes.

Given a  Lagrangian matroid ${L}$ whose collection of bases is $\mc{B}$, we can 
define the dual matroid as follows. 
The collection of bases of the dual matroid are given by $B^\ast = \{B^\ast \mid B\in \mc{B} \}$. Similarly, the collection of circuits of the dual matroid are given by 
$\mc{C}^\ast =\{C^\ast \mid C\in \mc{C} \}$. Elements of $\mc{C}^\ast$ are also called cocircuits 
of $\mc{L}$.

Let ${L}$ be a self-dual Lagrangian matroid, then we define an access structure from the
circuits of $\mc{L}$  as follows. Define the map $\varphi:[n]\cup [n]^\ast\rightarrow [n]$
where 
\ben
\varphi(i)=\left\{ \ba{ll}i & \text{ if } i \in [n]\\
i^\ast & \text{ if } i \in [n]^\ast\ea\right.
\een
We obtain an access structure by considering $i\in [n]$
as the dealer. The induced minimal  access structure is given as
\ben
\Gamma_{i,\min} = \{ \varphi(A )\mid A \cup \{ i\} \text{ or } A \cup \{ i^\ast\} \in \mc{C}\},\label{eq:indAcc}
\een
where $\mc{C}$ is the collection of circuits of $\mc{L}$. 
We say a Lagrangian matroid is secret sharing if the access structure induced by it for 
any $i\in [n]$ is a quantum access structure. 
(Such an access structure is monotonic and satisfies the no-cloning theorem. In terms of 
minimal access structures, it means that any two authorized sets are not disjoint.)

It is possible that a Lagrangian matroid
can induce a quantum access structure for some $i\in [n]$ but not all $i$. For simplicity we
consider the case when it induces on all $i\in [n]$.

We do not yet have a condition for which Lagrangian matroids induce quantum access structures
and which do not. We provide partial answers in both directions. First we give a necessary
condition for a Lagrangian matroid to induce a quantum secret sharing scheme. Then we give
a sufficient condition for a Lagrangian matroid to induce a secret sharing scheme. 

\begin{theorem}\label{th:necessQss}
Suppose that $G$ is a  graph wihtout loops or multi-edges and whose adjacency matrix is 
given by $A$. Let ${L}$ be a Lagrangian matroid induced by $G$ such that 
${L}$ is represented by $\left[\ba{cc}I&A\ea\right]$. If $G$ has no cycles of length $\leq 4$ and no vertices of degree 1, then the access structure induced by ${L}$ is not a valid 
quantum access structure.
\end{theorem}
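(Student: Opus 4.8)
The plan is to show that under the stated hypotheses the induced minimal access structure $\Gamma_{i,\min}$ contains two disjoint authorized sets, which by the criterion stated after equation~\eqref{eq:indAcc} violates the no-cloning condition and hence disqualifies it as a quantum access structure. First I would unpack what the circuits of $L$ look like when $L$ is represented by $[I\,|\,A]$ with $A$ a loopless, simple adjacency matrix. The representation has columns indexed by $[n]\cup[n]^\ast$: the $i$-th column in the $[n]$ block is $e_i$, and the $i^\ast$-th column in the $A$ block is the indicator of the neighborhood $N(i)$. A circuit is a minimally dependent admissible subset $C$; the relevant ones for the access structure are those containing $i$ or $i^\ast$. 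I would first identify, for a fixed dealer $i$, the small circuits through $i^\ast$: the column indexed $i^\ast$ equals $\sum_{j\in N(i)} e_j$, so $\{i^\ast\}\cup\{j : j\in N(i)\}$ is dependent, and it is a circuit precisely because $G$ simple and the $e_j$ being standard basis vectors forces minimality. Hence $\varphi$ of $\{j:j\in N(i)\}$, i.e. $N(i)$ itself, is a minimal authorized set.

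Next I would produce a \emph{second} minimal authorized set disjoint from $N(i)$. The natural candidate is a circuit through $i$ (not $i^\ast$). Since column $i$ is $e_i$, a circuit containing $i$ must combine $e_i$ with other columns summing to $e_i$; the smallest such come from using a neighbor's starred column. Concretely, if $u\in N(i)$, then column $u^\ast = \sum_{j\in N(u)} e_j$ contains the term $e_i$, so $\{i, u^\ast\}\cup\{j : j\in N(u),\ j\neq i\}$ is dependent and, using simplicity and the absence of short cycles, minimal — a circuit through $i$. Its image under $\varphi$ is $\{u\}\cup (N(u)\setminus\{i\})$, which is a minimal authorized set $B$ for the dealer $i$. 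The key computation is then to check $B\cap N(i)=\emptyset$: an element of $B$ is either $u$ or a vertex $w\in N(u)\setminus\{i\}$. If $u\in N(i)$ were also in $N(i)$ that is automatic — wait, $u\in N(i)$, so I must instead pick $u\notin N(i)$; but $u$ was chosen as a neighbor of $i$. So the honest route is: take any neighbor $u$ of $i$ and examine $w\in N(u)$; if $w\in N(i)$ then $i,u,w$ form a triangle (cycle of length $3$), and $w=i$ is excluded; and $u\in N(i)$ is unavoidable, so this candidate does \emph{not} work directly. I would therefore instead build the second authorized set from a circuit that avoids $N(i)$ entirely by going two steps out: use $N(i)$ as the first authorized set and, for the second, a minimal circuit supported on vertices at distance $\geq 2$ from $i$, which exists because every neighbor $u$ of $i$ has degree $\geq 2$ (no degree-$1$ vertices), so $N(u)\setminus\{i\}\neq\emptyset$, and girth $>4$ ensures the relevant minimal dependency lives outside $N(i)\cup\{i\}$.

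The main obstacle — and the step to handle with care — is exactly this disjointness bookkeeping: I must exhibit two circuits $C_1,C_2\in\mc{C}$, one containing $i$ or $i^\ast$ and the other containing $i$ or $i^\ast$, such that $\varphi(C_1\setminus\{i,i^\ast\})$ and $\varphi(C_2\setminus\{i,i^\ast\})$ are disjoint, and I must use each hypothesis. The no-degree-$1$ condition guarantees each neighbor of $i$ has a further neighbor, so the "second-layer" circuit is nonempty; the girth $>4$ condition (no cycles of length $3$ or $4$) rules out the coincidences $N(u)\cap N(i)\neq\emptyset$ for $u\in N(i)$ (which would be a triangle) and $N(u)\cap N(u')\neq\{i\}$ for distinct $u,u'\in N(i)$ (which would be a $4$-cycle), so the second-layer vertices are genuinely new and the induced authorized set is disjoint from $N(i)$. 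I would then verify minimality of the constructed circuits from the column structure of $[I\,|\,A]$ (any proper sub-dependency would force a linear relation among fewer standard basis vectors, impossible), conclude that $\Gamma_{i,\min}$ has two disjoint members, invoke the stated no-cloning criterion, and we are done.
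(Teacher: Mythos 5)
Your identification of the relevant circuits is fine: $\{i^\ast\}\cup N(i)$ and, for a neighbour $u$ of $i$, $\{i,u^\ast\}\cup(N(u)\setminus\{i\})$ are indeed circuits of the matroid represented by $[I\,|\,A]$, inducing the authorized sets $N(i)$ and $\{u\}\cup(N(u)\setminus\{i\})$. The genuine gap is in your final step, where you insist that the second authorized set be disjoint from $N(i)$ and propose to get it from a circuit ``supported on vertices at distance $\geq 2$ from $i$.'' Any circuit usable in $\Gamma_{i,\min}$ must still contain $i$ or $i^\ast$, and no such circuit with the property you need has to exist: a circuit through the unstarred $i$ must contain $u^\ast$ for some $u\in N(i)$ (apart from $e_i$ itself, only those starred columns have a nonzero $i$-th coordinate), so its image under $\varphi$ always meets $N(i)$; and circuits through $i^\ast$ avoiding $N(i)$ can simply fail to exist. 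Concretely, take the $5$-cycle (girth $5$, all degrees $2$, so the hypotheses hold) with dealer $1$: an authorized set disjoint from $N(1)=\{2,5\}$ would require a dependent admissible subset of $\{1,1^\ast,3,3^\ast,4,4^\ast\}$ containing $1$ or $1^\ast$, and the corresponding columns $e_1$, $e_2+e_5$, $e_3$, $e_2+e_4$, $e_4$, $e_3+e_5$ admit no such dependency. So your strategy cannot close, even though the theorem holds for the $5$-cycle.

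The repair is already contained in your own construction: drop $N(i)$ as one of the two sets and instead take two distinct neighbours $u\neq w$ of $i$ (they exist because there are no degree-one vertices) together with the circuits $\{i,u^\ast\}\cup(N(u)\setminus\{i\})$ and $\{i,w^\ast\}\cup(N(w)\setminus\{i\})$. Their induced authorized sets $\{u\}\cup(N(u)\setminus\{i\})$ and $\{w\}\cup(N(w)\setminus\{i\})$ are disjoint from \emph{each other}: a common element would force either $u\in N(w)$, giving a triangle through $i$, or a vertex $x\neq i$ in $N(u)\cap N(w)$, giving a $4$-cycle through $i$, both excluded by the hypothesis on short cycles. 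Two disjoint minimal authorized sets violate the no-cloning condition, and this is exactly the paper's proof, phrased there in terms of the supports of the graph-state stabilizer generators $K_u$ and $K_w$ for $u,w\in N(i)$.
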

\begin{proof}
A Lagrangian matroid of this type corresponds to a graph state
 whose stabilizer is given by 
\be
S= \spn{K_v \mid v\in V(G)}, \mbox{ where } K_v= X_v\prod_{i\in N(v)} Z_i
\ee
and $V(G)$ is the vertex set of $G$ and $N(v)$ is the set of neighbors of $v$.
The associated Lagrangian matroid has the representation
$\left[\ba{cc}I&A\ea\right]$. Consider access structure induced by the vertex $v$. 
\be
\Gamma_{v,\min} = \{ \varphi(A )\mid A \cup \{ v\} \text{ or } A \cup \{ v^\ast\} \in \mc{C}\}.
\ee
Of interest are two
elements in $\mc{C}$ that are induced by the generators $K_u$, where $u,w\in N(v)$.
By assumption $|N(v)|>1$. Therefore there are at least two generators $u,w\in N(v)$.
The supports of generators correspond to circuits and are of the 
$\{u  \}\cup N(u)^\ast$ and $\{w \} \cup N(w)^\ast$ respectively. Consequently the sets induced by these circuits are of the form  
$\supp(K_u) \setminus \{ v \}$ and $\supp(K_w)\setminus v$. We claim that these two sets are disjoint. Suppose that they
are not, then there exists a vertex $x\neq v$ such that $x\in \supp(K_u)   \cap \supp(K_w)$. This 
implies that $G$ has a 4-cycle contrary to assumptions. Therefore these two circuits induce
disjoint authorized sets and the induced access structure cannot be a quantum access structure.
\end{proof}

\begin{lemma}
Let $L$ be a self-dual Lagrangian matroid whose collection of circuits is given by 
$\mc{C}$. Then the collection of cocircuits of ${L}$ is given by 
 $\mc{C}^\ast = \{ C^\ast \mid C\in \mc{C} \}=\mc{C} $.
\end{lemma}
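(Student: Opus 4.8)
The plan is to unwind the definitions of self-duality, circuits, and cocircuits and observe that the statement is essentially immediate once the right facts are assembled. Recall that for a Lagrangian matroid $L$ with collection of bases $\mc{B}$, the dual matroid $L^\ast$ has bases $\mc{B}^\ast = \{B^\ast \mid B\in \mc{B}\}$ and, by the definition given just above the statement, its circuits are $\mc{C}^\ast = \{C^\ast \mid C\in \mc{C}\}$ where $\mc{C}$ is the collection of circuits of $L$. So the first equality in the claim, $\mc{C}^\ast = \{C^\ast \mid C \in \mc{C}\}$, is just the definition of cocircuit and needs no argument. The content is the second equality, $\{C^\ast \mid C\in \mc{C}\} = \mc{C}$, and the plan is to derive it from the hypothesis that $L$ is self-dual.

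First I would make precise what self-dual means here: $L = L^\ast$, i.e.\ $\mc{B} = \mc{B}^\ast = \{B^\ast \mid B\in\mc{B}\}$. The key step is then to argue that the collection of circuits of a Lagrangian matroid is determined by its collection of bases, so that $\mc{B} = \mc{B}^\ast$ forces $\mc{C} = \mc{C}^\ast$. Concretely, a circuit is a minimally dependent admissible subset of $J$, and "dependent" for a representable Lagrangian matroid means not contained in any basis (and more generally, for abstract Lagrangian matroids, one uses the circuit axiomatization due to Wenzel/Bouchet that expresses $\mc{C}$ as a function of $\mc{B}$). Since the map $C \mapsto C^\ast$ is an involution on admissible subsets of $J$ that carries the basis structure of $L$ to that of $L^\ast$, it carries independent sets to independent sets and hence minimally dependent sets to minimally dependent sets; therefore $\mc{C}(L^\ast) = \{C^\ast \mid C \in \mc{C}(L)\}$. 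Combining with $L = L^\ast$ gives $\mc{C} = \mc{C}(L) = \mc{C}(L^\ast) = \{C^\ast \mid C\in \mc{C}\}$, which is the desired identity.

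The main obstacle is making the step "circuits are determined by bases, and the involution respects this" rigorous at the level of generality of symplectic/Lagrangian matroids, where the notion of "dependent set" is not as transparent as in ordinary matroids. For the representable case — which is the case of interest for the quantum secret sharing application — this is clean: dependence of a column set is a purely linear-algebraic notion, the involution $\ast$ corresponds to swapping the roles of the $[n]$ and $[n]^\ast$ blocks of the symplectic representation, and one checks directly that a minimal linear dependence among columns indexed by $C$ transports to a minimal linear dependence among columns indexed by $C^\ast$ in the representation of the dual. So in writing the proof I would either restrict to the representable setting (citing Lemma~\ref{lm:repCkts} and Proposition~\ref{prop:repSM} for the relevant facts about representations) or invoke Bouchet's circuit characterization of Lagrangian matroids to cover the abstract case; the former suffices for everything used later in the paper.

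Finally I would spell out the short chain of equalities explicitly: by the definition of cocircuit, $\mc{C}^\ast = \{C^\ast \mid C\in\mc{C}\}$; by the observation that duality sends circuits to starred circuits, $\mc{C}(L^\ast) = \{C^\ast \mid C\in \mc{C}\}$; and by self-duality $\mc{C}(L^\ast) = \mc{C}(L) = \mc{C}$. Hence $\mc{C}^\ast = \mc{C}$, completing the proof.
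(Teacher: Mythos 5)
Your proposal is correct and follows essentially the same route as the paper: the first equality is definitional, and the second follows because self-duality makes $\mc{B}$ closed under the involution $\ast$, which preserves containment, so a minimally dependent admissible set $C$ maps to a minimally dependent admissible set $C^\ast$, giving $\mc{C}=\{C^\ast \mid C\in\mc{C}\}$. Your extra care about transferring minimality and about the meaning of dependence in the abstract (non-representable) setting only makes explicit what the paper's terser argument leaves implicit.
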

\begin{proof}
Let $\mc{B}$ be the collection of bases of the matroid. Then collection of bases of the dual
matroid is given by $\mc{B}^\ast= \{B^\ast\mid B\in \mc{B} \}$.
Let $C \in \mc{C}$ be a circuit of the matroid. Since $B^\ast$ is also an element of 
$\mc{B}$, $C$ is not a subset of $B^\ast$ for any $B\in \mc{B}$. Therefore, $C^\ast$ is in 
$\mc{C}$ as well, and $\mc{C}=\mc{C}^\ast =\{C^\ast\mid C\in \mc{C} \}$, which is precisely
the collection of circuits of the dual matroid.
\end{proof}

\begin{theorem}\label{th:selfdualLmat}
Let $\mc{L}$ be a self-dual Lagrangian matroid. Then the 
access structure $\Gamma_{i,\min}$ as defined in equation~\eqref{eq:indAcc} is a valid 
quantum access structure. 
\end{theorem}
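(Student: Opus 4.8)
The plan is to verify the two defining properties of a valid quantum access structure in turn. Monotonicity is automatic, since $\Gamma_{i,\min}$ is declared as the family of minimal authorized sets and the access structure is its upward closure; so the real content is the no-cloning condition, namely that no two members of $\Gamma_{i,\min}$ are disjoint. I tacitly assume $\{i\}\notin\mc{C}$ (otherwise $i$ is a loop, the secret is public, and the claim degenerates) and, as is in any case needed for the scheme to be realised by a quantum state, that $\mc{L}$ is representable.

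First I would unwind what a disjointness would mean. A member of $\Gamma_{i,\min}$ is $\varphi(C\setminus\{i\})$ or $\varphi(C\setminus\{i^\ast\})$ for a circuit $C\in\mc{C}$ meeting $\{i,i^\ast\}$; using the preceding lemma $\mc{C}=\mc{C}^\ast$ together with the identity $\varphi\circ{\ast}=\varphi$, a circuit through $i^\ast$ can be replaced by its $\ast$-image, which is again a circuit and induces the same set. Hence it suffices to take two circuits $C_1,C_2\in\mc{C}$ with $i\in C_1\cap C_2$ and rule out $\varphi(C_1\setminus\{i\})\cap\varphi(C_2\setminus\{i\})=\emptyset$. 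A short check identifies this disjointness with the pair of conditions $C_1\cap C_2=\{i\}$ and $C_1\cup C_2$ admissible: a common element $x\neq i$ would lie in both $\varphi$-images, and $j\in C_1$, $j^\ast\in C_2$ (or vice versa) would put $\varphi(j)=j$ in both, while conversely $\varphi$ is injective on an admissible set.

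The heart of the argument is to contradict the existence of such $C_1,C_2$ using self-duality and the isotropic representation. Let $M$ represent $\mc{L}$, with row space $V\subseteq\F_q^{2n}$; since $\mc{L}$ is Lagrangian, $V=V^{\sdual}$, and a short computation relating the symplectic form to the standard one shows that the space of linear dependences among the columns of $M$ is the image of $V$ under the involutive coordinate map swapping the two halves, whose action on supports is exactly $\ast$. Thus for each circuit $C$ of $\mc{L}$ — a minimal admissible support of a column dependence — there is a $v\in V$ with $\supp(v)=C^\ast$. Applied to $C_1$ this gives $v_1\in V$ with $\supp(v_1)=C_1^\ast$, which contains $i^\ast$; applied to the circuit $C_2^\ast$ (a circuit because $\mc{C}=\mc{C}^\ast$) it gives $v_2\in V$ with $\supp(v_2)=C_2$, which contains $i$. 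Because $C_1\cap C_2=\{i\}$ and $C_1\cup C_2$ is admissible, the supports of $v_1$ and $v_2$ overlap in a single coordinate, at position $i$, where $v_1$ lives on the $Z$-component (from $i^\ast$) and $v_2$ on the $X$-component (from $i\in[n]$). Hence $\langle v_1|v_2\rangle_s$ reduces to one non-zero term — up to sign the product of the $i$th $Z$-coordinate of $v_1$ and the $i$th $X$-coordinate of $v_2$ — contradicting $V=V^{\sdual}$. So $\Gamma_{i,\min}$ has no two disjoint members and the access structure is valid.

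The step that needs the most care is the last one: one must state precisely the dictionary ``circuit of $\mc{L}$ $\leftrightarrow$ $\ast$-image of a support of a vector of $V$'', which is exactly where the Lagrangian hypothesis $V=V^{\sdual}$ is used (self-orthogonality alone would not do), and one must be scrupulous about which component, $X$ or $Z$, is non-zero at position $i$ for each of $v_1,v_2$ — it is precisely the clash of these ``types'', forced by self-duality on one side and by $i$ being an unstarred element on the other, that makes $\langle v_1|v_2\rangle_s$ non-zero. A representation-free route would instead invoke a circuit--cocircuit orthogonality for symplectic matroids (for self-dual $\mc{L}$ the circuits are also cocircuits, and a circuit and a cocircuit ought not to meet in exactly one element when their union is admissible); there the obstacle is to establish that orthogonality in the symplectic setting.
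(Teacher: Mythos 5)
Your proposal is correct, but it reaches the conclusion by a genuinely different route than the paper. Both arguments perform the same initial reduction (using $\mc{C}=\mc{C}^\ast$ and $\varphi\circ\ast=\varphi$ to replace a circuit through $i^\ast$ by its $\ast$-image, so that one may assume two circuits through $i$), and both correctly identify that the whole content is ruling out two circuits $C_1,C_2$ with $C_1\cap C_2=\{i\}$. From there the paper stays abstract: self-duality turns $C_2$ (as $B\cup\{i\}$) into a cocircuit, and it then invokes the circuit--cocircuit intersection theorem for symplectic matroids (Theorem~4.2.5 of Borovik--Gelfand--White) to conclude $|(A\cup\{i\})\cap(B\cup\{i\})|\neq 1$ --- exactly the ``orthogonality in the symplectic setting'' that your closing paragraph correctly guesses would be needed and whose proof you defer. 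You instead prove that orthogonality by hand in the representable case: using $V=V^{\sdual}$ to identify column dependences with the half-swap of $V$, extracting $v_1,v_2\in V$ with $\supp(v_1)=C_1^\ast$, $\supp(v_2)=C_2$, and observing that the symplectic form then reduces to the single non-vanishing $X$--$Z$ clash at position $i$, contradicting isotropy. This computation is sound (indeed, the nonzero terms of $\langle v_1|v_2\rangle_s$ biject with $C_1\cap C_2$, so admissibility of $C_1\cup C_2$ is not even strictly needed at that step), and it has the virtue of being self-contained and of making visible exactly where the Lagrangian condition $V=V^{\sdual}$ enters. What it buys less of is generality: the theorem as stated assumes only that $\mc{L}$ is a self-dual Lagrangian matroid, with no representability hypothesis, whereas your argument needs a representation (you flag this explicitly, and it is the case relevant for realizing the schemes by quantum codes, but the paper's citation-based proof covers non-representable self-dual Lagrangian matroids as well). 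You also lean on the dictionary ``matroid circuits $=$ minimal admissible linearly dependent column sets,'' which the paper itself uses tacitly (e.g.\ in Lemma~\ref{lm:repCkts}); spelling that out would make your representation-based proof fully rigorous.
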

\begin{proof}
Let $A'$ and $B'$ be two authorized sets in $\Gamma_{i,\min}$. Then there exist two circuits
$A\cup \{ a\}$ and $B\cup\{b\}$ such that $A'=\varphi(A)$ and $B'=\varphi(B)$, where 
$a,b\in \{ i,i^\ast\}$.
Suppose that $a\neq b$. We observe that $B^\ast \cup \{b^\ast\}$
must be a cocircuit of $\mc{L}$. Since $\mc{L}$ is self-dual it follows that $B^\ast \cup \{b^\ast\}$
is a circuit of $\mc{L}$. Since $\varphi(B)= \varphi(B^\ast)$, we can 
instead consider $B^\ast$. Without loss of generality we can assume that $a=b=i$. 

The self-duality of $\mc{L}$ implies that $B\cup \{i\}$ is a cocircuit of $\mc{L}$. By 
\cite[Theorem~4.2.5]{borovik03} 
it follows that 
\be
|(A\cup \{i\})\cap (B\cup \{i\})| \neq 1.
\ee
But this implies that $|A\cap B|\geq 1$ for any pair of minimal authorized sets. 
This is the necessary and sufficient condition for an access structure to be a minimal
quantum access
structure. 
\end{proof}
\begin{corollary}\label{co:selfdualLmat}
A self-dual Lagrangian matroid induces a quantum secret sharing scheme.
\end{corollary}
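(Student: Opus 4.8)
The plan is to obtain the corollary directly from Theorem~\ref{th:selfdualLmat} together with the standard realizability criterion for quantum access structures. First I would note that Theorem~\ref{th:selfdualLmat} establishes, for \emph{every} choice of dealer $i\in[n]$, that the induced minimal access structure $\Gamma_{i,\min}$ on the player set $[n]\setminus\{i\}$ has the property that any two authorized sets intersect. Equivalently, the complementary (adversary) structure contains no authorized set, so $\Gamma_{i,\min}$ is monotone and satisfies the no-cloning theorem. By the very definition of a secret-sharing Lagrangian matroid given above, this already says that $\mc{L}$ is secret sharing in the sense introduced earlier.

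The remaining step is to promote ``$\Gamma_{i,\min}$ is a valid quantum access structure'' to ``$\Gamma_{i,\min}$ is realized by an actual quantum secret sharing scheme.'' For this I would invoke the known result (see \cite{pre10} and the references therein) that every monotone access structure whose minimal authorized sets pairwise intersect can be realized by a quantum secret sharing scheme; such schemes are built explicitly from stabilizer (in particular CSS-type) constructions or, more generally, from monotone-span-program / purification arguments. Applying this to $\Gamma_{i,\min}$ for each $i\in[n]$ then yields a quantum secret sharing scheme for the matroid, completing the proof.

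The point requiring the most care is matching the hypotheses of the realizability theorem to the object at hand: one must confirm that the family of sets extracted from the circuits via $\varphi$ is genuinely monotone (not merely that its minimal members pairwise intersect), and that the player/dealer bookkeeping is consistent, since $i$ plays a dual role as an index in $[n]$ and as the dealer. Monotonicity should follow because any superset of a circuit-derived authorized set still contains, in its $\varphi$-image, the image of some circuit; alternatively one simply passes to the monotone closure of $\Gamma_{i,\min}$, which changes nothing for secret-sharing purposes. I do not expect any genuinely new technical difficulty beyond Theorem~\ref{th:selfdualLmat} itself.
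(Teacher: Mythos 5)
Your proposal is correct and follows essentially the same route as the paper: the corollary is an immediate consequence of Theorem~\ref{th:selfdualLmat} applied to every dealer $i\in[n]$, combined with the standard fact (implicit in the paper, via \cite{pre10} and the no-cloning/monotonicity criterion) that any such access structure is realizable by a quantum secret sharing scheme. Your extra care about monotone closure and the realizability step just makes explicit what the paper leaves implicit; no gap.
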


However, self-dual Lagrangian matroids are not the only matroids which induce valid quantum 
access structures. 
Consider the Lagrangian matroid whose representation is given by the following matrix. 
\be
\left[\ba{cccccc|cccccc}0&0&0&0&0&0&1&1&1&1&1&1\\
1&1&1&1&1&1&0&0&0&0&0&0\\
0&1&0&0&1&0&0&0&1&1&0&0\\
0&0&1&0&0&1&0&0&0&1&1&0\\
0&1&0&1&0&0&0&0&0&0&1&1\\
0&0&1&0&1&0&0&1&0&0&0&1\\
\ea \right]
\ee
The circuits of this matroid are given by 
\be
\mc{C}=\left\{\ba{l}  
\{ 1,3^\ast,4,5^\ast\} , 
\{ 1,4^\ast,5,6^\ast\}, 
\{ 1,2^\ast,5^\ast,6\},\\
\{ 1,2,3^\ast,6^\ast\},
\{ 1,2^\ast,3,4^\ast \},
\{ 1^\ast,2^\ast,4,5\},\\
\{ 1^\ast,3^\ast,5,6\},
\{ 1^\ast,2,4^\ast,6\},
\{ 1^\ast,2,3,5^\ast\},\\
\{ 1^\ast,3,4,6^\ast\},
\{ 2,3^\ast,4^\ast,5\},
\{ 3,4^\ast,5^\ast,6\},\\
\{ 2,4,5^\ast,6^\ast\},
\{ 2^\ast,3,5,6^\ast\},
\{ 2^\ast,3^\ast,4,6\}
\ea
 \right\}
\ee
The access structure induced by the treating the first coordinate as the dealer is 
given by 
\be
\Gamma_{1,\min} =\left\{ 
\ba{c}
\{2,3,4 \}, \{2,3,5 \},
\{2,3,6 \}, \{2,4,5 \},\\
\{2,4,6 \}, \{2,5,6 \},
\{3,4,5 \}, \{3,4,6 \},\\
\{3,5,6 \}, \{4,5,6 \}
\ea
\right\}
\ee
This is precisely the access structure of the $((3,5))$ threshold scheme and it can
be realized using the $[[5,1,3]]$ code. 
As this matroid is not self-dual, it shows that class of matroidal quantum secret sharing 
schemes is strictly larger than the class  induced by the class of self-dual Lagrangian 
matroids.

The dual of a matroid  $M=(J,\mc{B})$ is given by $M^\ast= (J,\mc{B}^\ast)$, where $\mc{B}^\ast =\{J\setminus B \mid B\in \mc{B} \}$. A matroid is said to be identically self-dual if $M=M^\ast$.
In \cite{pre10}, it was shown how to construct quantum secret sharing schemes from indentically self-dual matroids. 
This construction is a special case of Theorem~\ref{th:selfdualLmat}. 

\begin{lemma}
Let $M$ be an identically self-dual matroid. Then there exists a self-dual Lagrangian matroid
$L$ whose collection of bases is given by $\mc{B}(L) =\{ B\cup ([n]\setminus B)^\ast \mid B\in \mc{B}(M)\}$.
Further $L$  induces the same quantum access structure as $M$.
\end{lemma}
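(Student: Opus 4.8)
\emph{Proof proposal.} The plan is to establish, in order: that $\mc{B}(L)$ is the base collection of a symplectic matroid, which is then forced to be Lagrangian; that this matroid is self-dual; and that the access structure it induces via \eqref{eq:indAcc} is the one \cite{pre10} attaches to $M$.

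\emph{Step 1 ($L$ is a Lagrangian matroid).} An identically self-dual matroid on $[n]$ has $\rk(M)=n-\rk(M)$, hence $\rk(M)=n/2$; so each $T=B\cup([n]\setminus B)^\ast$ with $B\in\mc{B}(M)$ is admissible and has size $n$, and therefore any symplectic matroid with base collection $\mc{B}(L)$ has rank $n$ and is Lagrangian. Under the standard bijection $T\mapsto T\cap[n]$ between transversals of $J$ and subsets of $[n]$, the collection $\mc{B}(L)$ corresponds precisely to $\mc{B}(M)$. I would now invoke the known correspondence (\cite{borovik03}; equivalently Bouchet's symmetric matroids / $\Delta$-matroids) that a collection of transversals is the base collection of a Lagrangian matroid if and only if the corresponding collection of subsets of $[n]$ satisfies the symmetric exchange axiom. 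Since the bases of any matroid satisfy symmetric basis exchange, $\mc{B}(M)$ satisfies this axiom, so $L$ is a Lagrangian matroid, moreover a homogeneous one with $|B\cap[n]|=n/2$ for every basis. This reduction to symmetric exchange is the step I expect to require the most care, and is where one leans on \cite{borovik03}; for representable $M$ one could alternatively exhibit an explicit isotropic generator matrix and cite Proposition~\ref{prop:repSM}, but the combinatorial argument is uniform.

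\emph{Step 2 ($L=L^\ast$).} For $T=B\cup([n]\setminus B)^\ast\in\mc{B}(L)$ we have $T^\ast=([n]\setminus B)\cup B^\ast=B'\cup([n]\setminus B')^\ast$ with $B'=[n]\setminus B$, and $B'\in\mc{B}(M^\ast)=\mc{B}(M)$ since $M=M^\ast$; hence $T^\ast\in\mc{B}(L)$, so $\mc{B}(L)^\ast=\mc{B}(L)$.

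\emph{Step 3 (circuits and the access structure).} Write an admissible $D\subseteq J$ as $D=D_+\cup D_-^\ast$ with $D_+,D_-\subseteq[n]$ disjoint; then $D$ is independent in $L$ iff some $B\in\mc{B}(M)$ has $D_+\subseteq B$ and $B\cap D_-=\emptyset$. The key point is that $L$ has no mixed circuit: given a minimal mixed dependent $D$, deleting one $b^\ast$ (with $b\in D_-$) yields a basis $B_1\supseteq D_+$ with $B_1\cap D_-=\{b\}$, while deleting one $a\in D_+$ yields a basis $B_2$ with $D_+\setminus\{a\}\subseteq B_2$, $a\notin B_2$, $B_2\cap D_-=\emptyset$; a basis exchange in $M$ that removes $b$ from $B_1$ in favor of some element of $B_2\setminus B_1$ then produces a basis containing $D_+$ and disjoint from $D_-$, contradicting dependence of $D$. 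Hence every circuit of $L$ lies in $[n]$ or in $[n]^\ast$: a circuit inside $[n]$ is exactly a circuit of $M$, and a circuit inside $[n]^\ast$ is exactly $C^\ast$ for a cocircuit $C$ of $M$, which is again a circuit of $M$ because $M=M^\ast$. Thus $\mc{C}(L)=\mc{C}(M)\cup\{C^\ast\mid C\in\mc{C}(M)\}$. Substituting into \eqref{eq:indAcc}, the circuits through $i$ are the $C\in\mc{C}(M)$ with $i\in C$ (contributing $\varphi(C\setminus\{i\})=C\setminus\{i\}$) and the circuits through $i^\ast$ are the $C^\ast$ with $i\in C\in\mc{C}(M)$ (contributing $\varphi((C\setminus\{i\})^\ast)=C\setminus\{i\}$), so $\Gamma_{i,\min}=\{\,C\setminus\{i\}\mid C\in\mc{C}(M),\ i\in C\,\}$; these sets are pairwise incomparable since circuits of a matroid are, so this is genuinely the minimal access structure, and it is exactly the one \cite{pre10} obtains from the identically self-dual matroid $M$ with dealer $i$ (its validity as a quantum access structure being recovered again from Theorem~\ref{th:selfdualLmat}). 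The main obstacle is Step~1; the exclusion of mixed circuits in Step~3 is the other place where matroid theory proper (basis exchange) is needed.
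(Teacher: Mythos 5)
Your proposal is correct, and it reaches the conclusion by a partly different route that is in fact tighter than the paper's own argument. The paper's proof simply asserts that $\mc{B}(L)$ is a Lagrangian matroid (``because the cardinality of any element is $n$''), asserts self-duality ``as a consequence'' of that of $M$, and then gets the splitting of circuits into $[n]$ and $[n]^\ast$ by invoking Lemma~\ref{lm:repCkts} together with the restrictions of $L$ to $[n]$ and $[n]^\ast$; since Lemma~\ref{lm:repCkts} is stated and proved only for \emph{representable} homogeneous symplectic matroids, the paper's route implicitly leans on a representability assumption that the lemma statement does not make. You instead (i) verify that $\mc{B}(L)$ really satisfies the defining condition of a Lagrangian matroid via the symmetric-exchange/$\Delta$-matroid characterization from \cite{borovik03}, which fills in the Maximality-condition check the paper leaves implicit, (ii) make the self-duality computation $T^\ast=([n]\setminus B)\cup B^\ast$ explicit (same idea as the paper, just written out), and (iii) exclude mixed circuits by a direct basis-exchange argument in $M$ rather than by representability; your exchange argument is sound (with $B_1\cap D_-=\{b\}$, $B_2\cap D_-=\emptyset$, $a\notin B_2$, the exchanged basis $B_1-b+y$ with $y\in B_2\setminus B_1$ indeed contains $D_+$ and avoids $D_-$). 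From that point on both proofs coincide: circuits of $L$ inside $[n]$ are circuits of $M$, circuits inside $[n]^\ast$ are starred cocircuits, hence starred circuits of $M=M^\ast$, and substituting into \eqref{eq:indAcc} recovers exactly $\Gamma^M_{i,\min}$ as in \cite{pre10}, with validity as a quantum access structure supplied by Theorem~\ref{th:selfdualLmat}. In short, what your approach buys is a proof that works for arbitrary identically self-dual matroids with no representability hypothesis, at the cost of invoking the symmetric-exchange characterization of Lagrangian matroids, which is indeed the step requiring the external citation.
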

\begin{proof}
To
see this consider a identically self-dual matroid $M$ whose collection of bases is given by
$\mc{B}_1$. The collection of the bases for the dual matroid are given by 
$\mc{B}_1^\perp = \mc{B}_1$ because $M$ is identically self-dual. By definition 
$\mc{B}_1^\perp = \{ [n]\setminus B \mid B\in \mc{B}_1\}$. Therefore, for every basis 
$B$, $[n]\setminus B $ is also in $\mc{B}$. Now consider forming a Lagrangian matroid whose
collection of bases is given by $\mc{B}= \{B\cup ([n]\setminus B)^\ast \}$. It is Lagrangian
because the cardinality of any element in $\mc{B}$ is $n$. The self-duality of the 
symplectic matroid is a consequence of the self-duality of $M$.

By Theorem~\ref{th:selfdualLmat}, the access structure induced by $L$ is a valid quantum
access structure. We want to show that this access structure is precisely the access
structure induced by the matroid $M$. 
Recall that the access structure induced by 
$M$ is given by 
\be
\Gamma_{i,\min}^{M} =\{ A\mid A\cup \{ i\}  \in \mc{C}(M)\},
\ee
where $\mc{C}(M)$ is the collection of circuits of $M$. 

By Lemma~\ref{lm:repCkts}, the circuits of 
$L$ are either in $[n]$ or $[n]^\ast$. The restriction of $L$ to 
the transversal $[n]$ gives the matroid $M$, while the restriction to $[n]^\ast$ gives the
identically self-dual matroid $M^\ast=M$. Every circuit of $L$ contained in the restriction
$[n]$ (resp. $[n]^\ast$) is a circuit of $M$ (resp. $M^\ast$). But these exhaust the circuits 
of $L$. 
Thus the access structure induced by $L$, 
as given in \eqref{eq:indAcc}, is
exactly the same access structure as $M$.
\end{proof}

\section{Conclusion and open questions}
In this paper we have established a connection between quantum codes and symplectic matroids.
This opens a new perspective on quantum codes and has potential applications for 
quantum cryptography. 
Furthermore, this correspondence raises a number of interesting questions that are worth
pursuing. We list some of them here. 
\begin{enumerate}[1)]
\item Find representations for the graphical symplectic matroids. Alternatively, find a
criterion  to test which of these matroids are representable. 
\item  Find out if the quantum codes derived from the  symplectic matroid of a simple connected graph, have good parameters. 

\item What are the necessary and sufficient conditions for Lagrangian matroids to induce
quantum access structures? Can these be stated in terms of the graph underlying the 
Lagrangian matroid? 

\item Given a secret sharing  Lagrangian matroid, what is the associated quantum code that
realizes this access structure?
\item Define a polynomial that  captures the weight enumerator of the underlying quantum code for representable symplectic matroids.
\end{enumerate} 
We hope that the results in this paper will prompt further research into the applications of matroids for
quantum information. 
{
\section*{Acknowledgment}
I would like to thank Robert Raussendorf for many helpful discussions, and his support
and encouragement throughout this project. 
This research was sponsored by grants from NSERC, CIFAR, and MITACS.
}


\def\cprime{$'$}
\end{document}